\newcommand {\set}   [1] {\left\{ #1 \right\}}
\newcommand {\brc}   [1] {\left(#1\right)}
\newcommand {\Exp}       {\mathbb{E}}
\newcommand {\Prob}  [1] {\Pr \brc{#1 }}
\newcommand {\E}     [1] {\Exp\left[#1\right]}
\newcommand {\EE}    [2] {\Exp_{#1}\left[#2\right]}
\newcommand {\iprod} [2] {\langle #1, #2 \rangle}
\newcommand {\calS}   {{\cal{S}}}
\newcommand {\ONE}      {\text{\textbf{1}}}
\newtheorem{theorem}{Theorem}[section]
\newtheorem{lemma}[theorem]{Lemma}
\newtheorem{corollary}[theorem]{Corollary}
\newtheorem{definition}[theorem]{Definition}
\title{How to Play Unique Games on Expanders}
\author{Konstantin Makarychev\footnote{IBM T.J. Watson Research Center, Yorktown Heights, NY 10598.}
\and Yury Makarychev\footnote{Microsoft Research New England, One Memorial Drive, Cambridge, MA 02142.}}
\date{}
\begin{document}
\maketitle
\begin{abstract}
In this note we improve a recent result by Arora, Khot, Kolla, Steurer,
Tulsiani, and Vishnoi on solving the Unique Games problem on expanders.

Given a $(1-\varepsilon)$-satisfiable instance of Unique Games with the constraint graph $G$,
our algorithm finds an assignment satisfying at least a  $1- C \varepsilon/h_G$ 
fraction of all constraints if
$\varepsilon < c \lambda_G$ where $h_G$ is the edge expansion of $G$, $\lambda_G$ is the 
second smallest eigenvalue of the Laplacian of $G$, and $C$ and $c$ are some absolute constants.

We refer the reader to~\cite{AKK, CMM, CMM1, GT, Tre05} for the motivation and an overview of related work.
\end{abstract}

\section{Preliminaries: Expanders, Unique Games and SDP}
\subsection{Unique Games and Expanders}
In this note we study the Unique Games problem on regular expanders.

\begin{definition}[Unique Games Problem]
Given a constraint graph $G = (V,E)$
and a set of permutations
$\pi_{uv}$ on the set $[k] = \set{1,\dots, k}$ (for all edges $(u,v)$),
the goal is to assign a value (state) $x_u$ from $[k]$ to each vertex $u$
so as to satisfy the maximum number of constraints of the form
$\pi_{uv}(x_u) = x_v$. The cost of a solution is the fraction
of satisfied constraints.
\end{definition}

We assume that the underlying graph $G=(V,E)$ is a $d$-regular expander. The two key
parameters of the expander $G$ are the edge expansion $h_G$ and
the second eigenvalue of the Laplacian $\lambda_G$. The edge expansion
gives a lower bound on the size of every cut: for every subset of vertices
$X\subset V$, the size of the cut between $X$ and $|V\setminus X|$ is at least
$$h_G \times \frac{\min (|X|, |V\setminus X|)}{|V|} |E|.$$
It is formally defined as follows:
$$h_G =
\min_{X \subset V}
\left(\frac{|\delta(X,V\setminus X)|}{|E|} \left/
\frac{\min (|X|, |V\setminus X|)}{|V|}\right. \right),$$
here $\delta(X,V\setminus X)$ denotes the cut --- the set of edges going from $X$ to $V\setminus
X$. One can think of the second eigenvalue of the Laplacian
$$L_G(u,v)  =
\begin{cases}
1, & \text{if } u=v\\
-1/d, & \text{if } (u,v) \in E\\
0, &\text{otherwise.}
\end{cases}
$$
as of continuous
relaxation of the edge expansion. Note that the smallest eigenvalue of $L_G$ is 0; and
the corresponding eigenvector is a vector of all 1's, denoted by~$\ONE$. Thus
$$\lambda_G = \min_{x \perp \text{\large{\textbf{1}}}} \frac{\iprod{x}{L_Gx}}{\|x\|^2}.$$
Cheeger's inequality,
$$h_G^2/8 \leq \lambda_G \leq h_G,$$
shows that $h_G$ and $\lambda_G$ are closely related; however $\lambda_G$ can be much smaller than $h_G$ (the lower bound
in the inequality is tight).

\subsection{Results of Arora, Khot, Kolla, Steurer, Tulsiani, and Vishnoi}
In a recent work~\cite{AKK}, Arora, Khot, Kolla, Steurer, Tulsiani, and Vishnoi
showed how given a $(1-\varepsilon)$ satisfiable instance of Unique Games
(i.e. an instance in which the
optimal solution satisfies at least a $(1-\varepsilon)$ fraction of
constraints), one can obtain a solution of cost
$$1- C \frac{\varepsilon}{\lambda_G} \log \left(\frac{\lambda_G}{\varepsilon}\right)$$
in polynomial time, here $C$ is an absolute constant. We improve their result and show that, if
the ratio $\varepsilon/\lambda_G$ is less than some universal positive constant $c$, one
can obtain a solution of cost
$$1- C' \frac{\varepsilon}{h_G}$$
in polynomial time.
As mentioned above, $\lambda_G$ can be significantly smaller than $h_G$, then
our result gives much better approximation guarantee. However, even if $\lambda_G \approx h_G$,
our bound is asymptotically stronger, since
$$1- C' \frac{\varepsilon}{h_G} \geq 1- C' \frac{\varepsilon}{\lambda_G}$$
(our bound does not have a $\log (\lambda_G/\varepsilon)$ factor).
It is an interesting open question,
if one can replace the condition $\varepsilon/\lambda_G < c$ with $\varepsilon/h_G < c$.

\subsection{Semidefinite Relaxation}
We use the standard SDP relaxation for the Unique Games problem.

$$
   \text{minimize }
       \frac{1}{2|E|} \sum_{(u,v)\in E} \sum_{i=1}^{k} \|u_{i} - v_{\pi_{uv}(i)}\|^2
$$
subject to
\begin{eqnarray}
\forall u\in V \; \forall i, j \in [k], i\neq j &\;& \iprod{u_i}{u_j}  = 0 \\
\forall u\in V &\;& \sum_{i=1}^{k}\|u_i\|^2 = 1 \label{cond:SumOne}\\
\forall u,v, w\in V\;\forall i,j,l \in [k]  &\;&  \|u_i - w_l\|^2 \leq
\|u_i-v_j\|^2 + \|v_j - w_l \|^2 \label{eq:Triangle}\\
\forall u,v \in V\;\forall i,j \in [k]
  &\;&  \|u_i - v_j\|^2 \leq \|u_i\|^2 + \|v_j \|^2 \label{eq:Triangle0}\\
\forall u,v \in V\;\forall i,j \in [k]
  &\;&  \|u_i\|^2 \leq \|u_i - v_j\|^2 + \|v_j \|^2 \label{eq:Triangle1}
\end{eqnarray}

For every vertex $u$ and state $i$ we introduce a vector $u_i$. In the intended integral
solution $u_i = 1$, if $u$ has state $i$; and $u_i =0$, otherwise. All SDP constraints
are satisfied in the integral solution; thus this is a valid relaxation. The objective function
of the SDP measures what fraction of all Unique Games constraints is \textit{not satisfied}.

\section{Algorithm}
We define the \textit{earthmover distance} between two sets of orthogonal vectors $\set{u_1,\dots, u_k}$
and $\set{v_1,\dots, v_k}$ as follows:
$$
\Delta(\set{u}_i, \set{v}_i) \equiv
\min_{\sigma(i) \in \calS_k} \sum_{i=1}^{k} \|u_i - v_{\sigma(i)}\|^2,
$$
here $\calS_k$ is the symmetric group, the group of all permutations on the set $[k]=\set{1,\dots,k}$. Given an SDP solution $\set{u_i}_{u,i}$ we define the earthmover distance between vertices in a natural way:
$$
\Delta(u,v) = \Delta(\set{u_1,\dots, u_k},\set{v_1,\dots, v_k}).
$$

Arora et al.~\cite{AKK} proved that if an instance of Unique Games on an expander is almost satisfiable,
then the average earthmover distance between two vertices (defined by the SDP solution) is small. We will need the following corollary from their results:

\textit{For every $R\in(0,1)$, there exists a positive $c$, such that for every
$(1-\varepsilon)$ satisfiable instance of Unique Games on an expander graph
$G$, if $\varepsilon/\lambda_G < c$, then the
expected earthmover distance between two random vertices
is less than~$R$ i.e.
$$\EE{u,v\in V}{\Delta (u, v)}\leq R.$$}
In fact, Arora et al.~\cite{AKK} showed that $c \geq \Omega(R / \log (1/R))$, but we
will not use this bound. Moreover, in the rest of the paper, we fix the value of $R < 1/4$. We
pick $c_R$, so that if $\varepsilon/\lambda_G < c_R$, then
\begin{equation}\label{eq:R4}
\EE{u,v\in V}{\Delta (u, v)}\leq R/4.
\end{equation}

Our algorithm transforms vectors $\set{u_i}_{u,i}$ in the SDP solution to vectors $\set{\tilde{u}_i}_{u,i}$
using a \textit{normalization} technique introduced by Chlamtac, Makarychev and Makarychev~\cite{CMM1}:
\begin{lemma}\label{lem:CMM1}\cite{CMM1} For every SDP solution $\set{u_i}_{u,i}$, there exists a set of vectors
$\set{\tilde{u}_i}_{u,i}$ satisfying the following properties:
\begin{enumerate}
\item Triangle inequalities in $\ell_2^2$: for all vertices $u$, $v$, $w$ in $V$ and
all states $i$, $p$, $q$ in $[k],$
$$
\|\tilde{u}_i - \tilde{v}_p\|_2^2 + \|\tilde{v}_p
- \tilde{w}_q\|_2^2 \geq \|\tilde{u}_i - \tilde{w}_q\|_2^2.$$
\item For all vertices $u, v$ in $V$ and all states $i,j$ in $[k]$,
$$\iprod{\tilde{u}_i}{\tilde{v}_j} = \frac{\iprod{u_i}{v_j}}{\max(\|u_i\|^2,\|v_j\|^2)}.$$
\item For all non-zero vectors $u_i$, $\|\tilde{u}_i\|_2^2 = 1$.
\item For all $u$ in $V$ and $i\neq j$ in $[k]$, the vectors
$\tilde{u}_i$ and $\tilde{u}_j$ are  orthogonal.
\item For all  $u$ and $v$ in $V$ and $i$ and $j$ in $[k]$,
$$\|\tilde{v}_j - \tilde{u}_i\|_2^2 \leq \frac{2\,\|v_j - u_i \|^2}{\max(\|u_i\|^2, \|v_j\|^2)}.$$
\end{enumerate}
The set of vectors $\set{\tilde{u}_i}_{u,i}$ can be obtained in polynomial time.
\end{lemma}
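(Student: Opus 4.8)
The plan is to construct the normalized vectors $\set{\tilde u_i}_{u,i}$ explicitly and then verify the five properties one at a time. The natural construction, following the Chlamtac–Makarychev–Makarychev normalization, is to pass to a tensor-product space: given the SDP vectors $\set{u_i}_{u,i}$, define, for each non-zero $u_i$,
$$\tilde u_i = \frac{u_i}{\|u_i\|} \otimes \frac{u_i}{\|u_i\|} \otimes e_{[u_i]},$$
where $e_{[u_i]}$ is a unit vector in an auxiliary space indexed by the value $\|u_i\|^2$ (so that two vectors $u_i$, $v_j$ get orthogonal auxiliary coordinates exactly when $\|u_i\|^2 \neq \|v_j\|^2$); for zero vectors $u_i$ we simply set $\tilde u_i = 0$. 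Actually, to get the inner product formula with the $\max$ in the denominator, the cleanest device is to additionally tensor with a ``threshold'' gadget: introduce, for each possible squared-norm value $t$ appearing among the $\|u_i\|^2$, a vector $w_t$ with the property that $\iprod{w_s}{w_t} = \min(s,t)/(st)$ scaled appropriately — equivalently, work with vectors $u_i/\|u_i\|^2$ and a family of indicator-type vectors so that the pairwise inner product collapses to $\iprod{u_i}{v_j}/\max(\|u_i\|^2,\|v_j\|^2)$. I would set up this gadget first, since everything else follows formally from it.

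The key steps, in order: (i) define the auxiliary ``staircase'' vectors $w_t$ for $t$ ranging over the (finitely many) distinct squared norms, verify $\|w_t\| = 1$ and $\iprod{w_s}{w_t}$ has the desired form; (ii) define $\tilde u_i$ as the appropriate tensor combination of $u_i$ (or $u_i/\|u_i\|^2$) with $w_{\|u_i\|^2}$, and compute $\iprod{\tilde u_i}{\tilde v_j}$ directly — this gives Property~2 by construction; (iii) Property~3 is the special case $u=v$, $i=j$ of Property~2 since $\iprod{u_i}{u_i} = \|u_i\|^2 = \max(\|u_i\|^2,\|u_i\|^2)$; (iv) Property~4 follows because $\iprod{u_i}{u_j} = 0$ for $i \neq j$ at the same vertex, so by Property~2 $\iprod{\tilde u_i}{\tilde u_j} = 0$; (v) Property~5 is a computation: expand $\|\tilde v_j - \tilde u_i\|^2 = 2 - 2\iprod{\tilde u_i}{\tilde v_j}$ using Property~2, expand $\|v_j - u_i\|^2 = \|u_i\|^2 + \|v_j\|^2 - 2\iprod{u_i}{v_j}$, and show the ratio is at most $2$ — this reduces to the elementary inequality $1 - s/\max(a,b) \le (a+b-2s)/\max(a,b)$ whenever $s \le \min(a,b)$ together with the SDP triangle inequality~\eqref{eq:Triangle0} guaranteeing $\iprod{u_i}{v_j} \ge 0$ is not needed — rather one uses $|\iprod{u_i}{v_j}| \le \min(\|u_i\|^2,\|v_j\|^2)$ from Cauchy–Schwarz after noting $\|u_i\|\|v_j\| \le \max(\|u_i\|^2,\|v_j\|^2)$ would be too weak, so one instead invokes~\eqref{eq:Triangle1} and~\eqref{eq:Triangle0} to bound the inner product appropriately.

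The genuine obstacle is Property~1, the $\ell_2^2$-triangle inequality for \emph{all} triples of vertices and \emph{all} states, including mismatched states — the original SDP only guarantees triangle inequalities~\eqref{eq:Triangle} among the $u_i$, and after normalizing by different factors $\max(\|u_i\|^2,\|v_j\|^2)$ these can a priori be destroyed. The standard resolution is that the tensor construction makes all $\tilde u_i$ \emph{unit} vectors, and for unit vectors the $\ell_2^2$-triangle inequality $\|\tilde u_i - \tilde v_p\|^2 + \|\tilde v_p - \tilde w_q\|^2 \ge \|\tilde u_i - \tilde w_q\|^2$ is equivalent to $\iprod{\tilde u_i}{\tilde v_p} + \iprod{\tilde v_p}{\tilde w_q} - \iprod{\tilde u_i}{\tilde w_q} \le 1$; one then checks this holds because each pairwise inner product is itself an inner product of unit vectors (hence in $[-1,1]$) and, crucially, the auxiliary staircase coordinate forces a sign/magnitude structure — when the three squared norms are not all equal, the tensoring with the $w_t$'s kills enough of the inner products that the bound holds trivially, and when they are all equal the claim reduces to the original SDP triangle inequality scaled by the common norm. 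Carefully organizing this case analysis on the relative order of $\|u_i\|^2, \|v_p\|^2, \|w_q\|^2$, and checking that the staircase gadget behaves correctly in each case, is where the real work lies; the remaining four properties are then essentially bookkeeping. Since Lemma~\ref{lem:CMM1} is quoted verbatim from~\cite{CMM1}, I would at this point simply cite that paper for the full verification rather than reproduce it.
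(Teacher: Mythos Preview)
The paper does not give its own proof of this lemma: it is stated with a citation to~\cite{CMM1} and used as a black box, with no argument supplied. Your final sentence --- to cite~\cite{CMM1} rather than reproduce the verification --- is therefore exactly what the paper does, and the sketch you give beforehand (tensor/staircase normalization, with Property~1 as the nontrivial step and Properties~2--5 following formally from the inner-product formula) already goes beyond the paper's treatment. So your proposal is consistent with, and more detailed than, the paper.

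One caution about the sketch itself, should you ever want to flesh it out: your reduction of Property~1 in the equal-norm case to the original SDP triangle inequality~\eqref{eq:Triangle} is not quite enough, since for unit vectors the $\ell_2^2$-triangle inequality is equivalent to $\iprod{\tilde u_i}{\tilde w_q} + 1 \ge \iprod{\tilde u_i}{\tilde v_p} + \iprod{\tilde v_p}{\tilde w_q}$, and this does \emph{not} follow from~\eqref{eq:Triangle} alone after dividing through by a common norm --- the actual construction in~\cite{CMM1} uses an infinite tensor/integral representation that builds the triangle inequality in more directly. Your verification of Property~5 via $\iprod{u_i}{v_j} \le \min(\|u_i\|^2,\|v_j\|^2)$, deduced from~\eqref{eq:Triangle1}, is correct.
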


Now we are ready to describe the rounding algorithm. The algorithm
given an SDP solution, outputs an assignment of states (labels)
to the vertices.

\rule{0pt}{12pt}
\hrule height 0.8pt
\rule{0pt}{1pt}
\hrule height 0.4pt
\rule{0pt}{3pt}

\textbf{Approximation Algorithm}

\begin{enumerate}
\item[] \textbf{Input:} an SDP solution $\set{u_i}_{u,i}$ of cost $\varepsilon$.
\item[] \textbf{Initialization}
\item Pick a random vertex $u$ (uniformly distributed) in $V$. We call this vertex \textit{the initial vertex}.
\item Pick a random state $i \in [k]$ for $u$; choose state $i$ with probability $\|u_i\|^2$. Note that $\|u_1\|^2 + \dots + \|u_k\|^2 = 1$. We call $i$ \textit{the initial state}.
\item Pick a random number $t$ uniformly distributed in the segment $[0,\|u_i\|^2]$.
\item Pick a random $r$ in $[R,2R]$.
\item[] \textbf{Normalization}
\item Obtain vectors $\set{\tilde{u}_i}_{u,i}$ as in Lemma~\ref{lem:CMM1}.
\item[] \textbf{Propagation}
\item For every vertex $v$,
\begin{itemize}
\item  Find all states $p \in [k]$ such that $\|v_p\|^2 \geq t$ and
$\|\tilde{v}_p - \tilde{u}_i\|^2 \leq r$. Denote the set of $p$'s by~$S_v$:
$$S_v = \set{p:  \|v_p\|^2 \geq t \text { and } \|\tilde{v}_p - \tilde{u}_i\|^2 \leq r}.$$
\item If $S_v$ contains exactly one element $p$, then assign the state $p$ to $v$.
\item Otherwise, assign an arbitrary (say, random) state to $v$.
\end{itemize}
\end{enumerate}

\hrule height 0.4pt
\rule{0pt}{1pt}
\hrule height 0.8pt
\rule{0pt}{12pt}

Denote by $\sigma_{vw}$ the partial mapping from $[k]$
to $[k]$ that maps $p$ to $q$ if $\|\tilde{v}_p-\tilde{w}_q\|^2\leq 4R$. Note
that $\sigma_{vw}$ is well defined i.e. $p$ cannot be mapped to different
states $q$ and $q'$: if  $\|\tilde{v}_p-\tilde{w}_q\|^2\leq 4R$ and
$\|\tilde{v}_p-\tilde{w}_{q'}\|^2\leq 4R$, then, by the $\ell_2^2$ triangle inequality (see Lemma~~\ref{lem:CMM1}(1)),
$\|\tilde{w}_q-\tilde{w}_{q'}\|^2\leq 8R$, but $\tilde{w}_q$ and $\tilde{w}_{q'}$ are
orthogonal unit vectors, so
$$\|\tilde{w}_q-\tilde{w}_{q'}\|^2 = 2 > 8R.$$
Clearly, $\sigma_{vw}$ defines a partial matching between states of $v$ and states of $w$:
if  $\sigma_{vw}(p) = q$, then  $\sigma_{wv}(q) = p$.

\begin{lemma} If $p\in S_v$ and $q\in S_w$ with non-zero probability,
then $q=\sigma_{vw}(p)$.
\end{lemma}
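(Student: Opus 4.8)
The plan is to unwind the definitions of $S_v$ and $S_w$ and apply the $\ell_2^2$ triangle inequality (Lemma~\ref{lem:CMM1}(1)) twice, pivoting through the normalized initial vector $\tilde{u}_i$. Suppose $p \in S_v$ and $q \in S_w$ both occur with positive probability. Since $S_v$ and $S_w$ are defined in terms of the same random parameters, in particular the same initial vertex $u$, initial state $i$, and radius $r \in [R, 2R]$, the event ``$p \in S_v$ and $q \in S_w$'' forces the existence of a single choice of $r \le 2R$ with $\|\tilde{v}_p - \tilde{u}_i\|^2 \le r$ and $\|\tilde{w}_q - \tilde{u}_i\|^2 \le r$ simultaneously.

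Next I would combine these two inequalities: by Lemma~\ref{lem:CMM1}(1) applied to the triple $\tilde{v}_p, \tilde{u}_i, \tilde{w}_q$,
$$\|\tilde{v}_p - \tilde{w}_q\|^2 \le \|\tilde{v}_p - \tilde{u}_i\|^2 + \|\tilde{u}_i - \tilde{w}_q\|^2 \le 2r \le 4R.$$
By the definition of the partial mapping $\sigma_{vw}$ (it maps $p$ to $q$ precisely when $\|\tilde{v}_p - \tilde{w}_q\|^2 \le 4R$), this says exactly that $\sigma_{vw}(p) = q$. The well-definedness of $\sigma_{vw}$, already established in the paragraph preceding the lemma, guarantees there is no ambiguity in this conclusion.

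I expect the only subtlety — rather than a genuine obstacle — to be bookkeeping about which random quantities are shared: one must note that $r$ is drawn once (not separately for $v$ and for $w$), so that the $4R$ bound is obtained from a single radius rather than from two possibly different radii summing to more than $4R$. Everything else is a direct substitution into the already-proved triangle inequality for the normalized vectors, so the argument is short.
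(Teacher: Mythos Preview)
Your proposal is correct and matches the paper's proof essentially line for line: both use the membership conditions $\|\tilde{v}_p-\tilde{u}_i\|^2\le 2R$ and $\|\tilde{w}_q-\tilde{u}_i\|^2\le 2R$ (coming from the shared initial vertex $u$, state $i$, and radius $r\le 2R$), apply the $\ell_2^2$ triangle inequality from Lemma~\ref{lem:CMM1}(1) to get $\|\tilde{v}_p-\tilde{w}_q\|^2\le 4R$, and conclude $q=\sigma_{vw}(p)$ by definition. Your extra remark about the single shared $r$ is correct bookkeeping but not a departure from the paper's argument.
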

\begin{proof}
If $p\in S_v$ and $q\in S_w$ then for some vertex $u$ and state $i$,
$\|\tilde{v}_p - \tilde{u}_i\|^2 \leq 2R$
and
$\|\tilde{w}_q - \tilde{u}_i\|^2 \leq 2R$, thus by the triangle inequality
$\|\tilde{v}_p - \tilde{w}_q\|^2 \leq 4R$ and by the definition of $\sigma_{vw}$,
$q=\sigma_{vw}(p)$.
\end{proof}

\begin{corollary}\label{cor:oneelement}
Suppose, that $p\in S_v$, then the set $S_w$ either equals $\set{\sigma_{vw}(p)}$ or is empty (if $\sigma_{vw}(p)$ is not defined, then $S_w$ is empty). Particularly, if $u$ and~$i$ are the initial
vertex and state, then the set $S_w$ either equals $\set{\sigma_{uw}(i)}$ or is empty.
Thus, every set $S_w$ contains at most one element.
\end{corollary}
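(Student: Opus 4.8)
The corollary is a direct consequence of the preceding lemma, so the proof plan is short. First I would fix an outcome of the randomized initialization --- the initial vertex $u$, the initial state $i$, the threshold $t$, and the radius $r$ (an outcome occurring with positive probability) --- so that every set $S_v$ becomes a fixed subset of $[k]$. Suppose $p\in S_v$ for some vertex $v$, and let $q$ be an arbitrary element of $S_w$. Then $p\in S_v$ and $q\in S_w$ hold simultaneously with positive probability, so the preceding lemma gives $q=\sigma_{vw}(p)$. Since $q$ was arbitrary, $S_w\subseteq\set{\sigma_{vw}(p)}$; hence $S_w$ is either empty or equal to $\set{\sigma_{vw}(p)}$, and if $\sigma_{vw}(p)$ is undefined then no such $q$ exists at all, so $S_w=\emptyset$. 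This establishes the first assertion.

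For the ``particularly'' part, I would simply check that the initial state $i$ belongs to $S_u$. Indeed, $t$ is drawn uniformly from $[0,\|u_i\|^2]$, so $\|u_i\|^2\geq t$, and $\|\tilde u_i-\tilde u_i\|^2=0\leq r$; therefore $i\in S_u$. (Note that $u_i\neq 0$ whenever $i$ is chosen with positive probability, so $\tilde u_i$ is indeed well defined.) Applying the first assertion with $v=u$ and $p=i$ then shows that $S_w$ equals $\set{\sigma_{uw}(i)}$ or is empty.

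Finally, to see that each $S_w$ has at most one element, I would apply the first assertion with the source vertex and the target vertex both taken to be $w$: if $S_w$ is nonempty, pick $q\in S_w$; since $\|\tilde w_q-\tilde w_q\|^2=0\leq 4R$ we have $\sigma_{ww}(q)=q$, so $S_w\subseteq\set{q}$, i.e.\ $S_w=\set{q}$. There is no real obstacle in any of this; the only points that need a moment of care are the membership $i\in S_u$ and the bookkeeping for the case when $\sigma_{vw}(p)$ is undefined, both of which are immediate.
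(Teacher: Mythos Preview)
The paper states this corollary without proof, treating it as immediate from the preceding lemma; your argument correctly supplies the routine details (in particular the check that $i\in S_u$ and the handling of the undefined case) and is exactly the reasoning the paper is implicitly invoking. One minor remark: your final step with $v=w$ is fine, but the ``at most one element'' claim already follows directly from the ``particularly'' part, since $i\in S_u$ always holds and hence $S_w\subseteq\set{\sigma_{uw}(i)}$.
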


\begin{lemma}\label{lem:probui}
For every choice of the initial vertex $u$, for every $v\in V$ and $p\in[k]$
the probability that $p\in S_v$ is at most $\|v_p\|^2$.
\end{lemma}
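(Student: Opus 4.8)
The statement to prove is that for any fixed initial vertex $u$, any vertex $v$, and any state $p$, we have $\Prob{p \in S_v} \le \|v_p\|^2$, where the probability is over the random choices of the initial state $i$, the threshold $t$, and the radius $r$ (with $u$ fixed). The plan is to simply unfold the definition of the event $p \in S_v$ and bound it by dropping the constraint involving $r$ entirely, keeping only the threshold condition $\|v_p\|^2 \ge t$ and the requirement that $i$ be a ``good'' state (one for which $p$ can possibly land in $S_v$).

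First I would note that if $p\in S_v$ then in particular $\|v_p\|^2 \ge t$ and $\|\tilde v_p - \tilde u_i\|^2 \le r \le 2R$. Fix the initial state $i$. Conditioned on this choice, $t$ is uniform on $[0,\|u_i\|^2]$, so the event $\{t \le \|v_p\|^2\}$ has conditional probability $\min(1, \|v_p\|^2 / \|u_i\|^2)$. Next I would observe that, by Corollary~\ref{cor:oneelement} (or directly by the argument preceding it via the $\ell_2^2$ triangle inequality and orthogonality of the $\tilde u_j$'s), for a given $v$ and $p$ there is \emph{at most one} state $i$ of $u$ for which $\|\tilde v_p - \tilde u_i\|^2 \le 2R$ can hold — call it $i^\star$ if it exists. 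So the event $p\in S_v$ forces $i = i^\star$, which happens with probability $\|u_{i^\star}\|^2$ by the initial-state sampling rule.

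Putting these together: if $i^\star$ does not exist, $\Prob{p\in S_v}=0 \le \|v_p\|^2$ and we are done; otherwise
$$
\Prob{p \in S_v} \;\le\; \Prob{i = i^\star} \cdot \Prob{t \le \|v_p\|^2 \given i = i^\star}
\;=\; \|u_{i^\star}\|^2 \cdot \min\!\left(1, \frac{\|v_p\|^2}{\|u_{i^\star}\|^2}\right) \;\le\; \|v_p\|^2,
$$
which is exactly the claim. The choice of $r$ and the remaining constraints only shrink the event further, so they may be discarded.

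The only subtle point — the ``main obstacle,'' though it is minor — is making sure the uniqueness of $i^\star$ is invoked correctly: one must check that $\|\tilde v_p - \tilde u_i\|^2 \le 2R$ and $\|\tilde v_p - \tilde u_{i'}\|^2 \le 2R$ for $i \ne i'$ would force $\|\tilde u_i - \tilde u_{i'}\|^2 \le 8R < 2$, contradicting orthogonality of the unit vectors $\tilde u_i, \tilde u_{i'}$ (Lemma~\ref{lem:CMM1}(3),(4)); since $R < 1/4$ gives $8R < 2$, this is fine. One should also handle the degenerate case $\|v_p\|^2 = 0$ (then $t \le \|v_p\|^2$ forces $t=0$, a probability-zero event, and the bound holds trivially), and note that the formal argument does not even need the $\tilde u_i$-based uniqueness: bounding $\Prob{p \in S_v} \le \sum_i \|u_i\|^2 \cdot \mathbf{1}[\,i \text{ compatible with } p,v\,] \cdot \min(1,\|v_p\|^2/\|u_i\|^2)$ and using that at most one term is nonzero gives the result directly.
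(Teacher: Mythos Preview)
Your proof is correct and essentially identical to the paper's. The paper phrases the unique compatible state as $i=\sigma_{vu}(p)$ (using the partial map defined just before Lemma~2.1) rather than your $i^\star$, and then computes the same product $\|u_i\|^2\cdot\min(\|v_p\|^2/\|u_i\|^2,1)\le\|v_p\|^2$; your explicit verification of uniqueness via the $\ell_2^2$ triangle inequality is exactly the well-definedness argument the paper already gave when introducing~$\sigma_{vw}$.
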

\begin{proof}
If $p\in S_v$, then $i = \sigma_{vu} (p)$ is the initial state of $u$ and
$t \leq \|v_p\|^2$. The probability that both these events happen is
$$\Prob{i\in S_u}\times \Prob{t \leq \|v_p\|^2}=\|u_i\|^2 \times \min(\|v_p\|^2/\|u_i\|^2,1) \leq \|v_p\|^2$$
(recall that $t$ is a random real number on the segment $[0, \|u_i\|^2]$).
\end{proof}

Denote the set of those vertices $v$ for which $S_v$ contains exactly one element by $X$. First, we show that on average $X$ contains a constant fraction of all vertices (later we will prove a
much stronger bound on the size of $X$).

\begin{lemma}\label{lem:quart}
If $\varepsilon/\lambda_G \leq c_R$, then the expected size of $X$ is at least $|V|/4$.
\end{lemma}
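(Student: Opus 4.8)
The plan is to show that for a random choice of initial vertex $u$ and initial state $i$, the expected number of vertices $v$ landing in $X$ is at least $|V|/4$. Fix the initial vertex $u$ and state $i$ for the moment. A vertex $v$ is put into $X$ precisely when $|S_v| = 1$. By Corollary~\ref{cor:oneelement}, $S_v$ is always either empty or equal to $\set{\sigma_{uv}(i)}$, so $|S_v| = 1$ exactly when the particular state $p = \sigma_{uv}(i)$ (assuming it is defined) satisfies $\|v_p\|^2 \geq t$ and $\|\tilde v_p - \tilde u_i\|^2 \leq r$. The strategy is to lower-bound the probability of this event over the random choices of $t$ and $r$, and then average over $u$, $i$, $v$.

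First I would handle the ``distance'' condition. Because $\Delta(u,v)$ is small on average, the matching $\sigma$ realizing $\Delta(u,v)$ will, for most states, send $u_i$ to a $v_p$ with $\|\tilde u_i - \tilde v_p\|^2$ small; using Lemma~\ref{lem:CMM1}(5) and Markov's inequality, one shows that with probability close to $1$ over the choice of $u, i$ (weighted by $\|u_i\|^2$) and $v$, the matched partner $p$ satisfies $\|\tilde u_i - \tilde v_p\|^2 \leq R$, which is $\leq r$ for every $r \in [R, 2R]$; moreover this $p$ must then equal $\sigma_{uv}(i)$. Concretely, $\EE{u,v}{\Delta(u,v)} \le R/4$ from \eqref{eq:R4}, and after the standard ``pick $i$ with probability $\|u_i\|^2$'' reweighting this says the expected squared normalized distance from $\tilde u_i$ to its matched partner is $O(R)$; Markov then gives that the bad event (matched partner at normalized distance $> R$, or the contribution of $i$ to $\Delta$ being large) has probability at most some constant strictly less than $1$, which I will arrange to be at most, say, $1/4$ by the choice $R < 1/4$ and by absorbing constants.

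Next I would handle the ``norm/threshold'' condition $\|v_p\|^2 \geq t$. Conditioned on the initial state being $i$, the random threshold $t$ is uniform on $[0, \|u_i\|^2]$, so $\Prob{t \le \|v_p\|^2} = \min(\|v_p\|^2/\|u_i\|^2, 1)$. Here I use the SDP triangle-type constraint: since $\Delta(u,v)$ is small, $\|u_i - v_p\|^2$ is small for the matched pair, and constraint~\eqref{eq:Triangle1} gives $\|u_i\|^2 \le \|u_i - v_p\|^2 + \|v_p\|^2$, so $\|v_p\|^2 \ge \|u_i\|^2 - \|u_i - v_p\|^2$, i.e. the ratio $\|v_p\|^2/\|u_i\|^2$ is close to $1$ whenever the matched contribution $\|u_i - v_p\|^2$ is small relative to $\|u_i\|^2$. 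So, except on a small-probability bad set, $\Prob{t \le \|v_p\|^2 \mid i} \ge 1 - o(1)$, or at worst $\ge$ some constant like $3/4$.

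Finally, combining: let $p = \sigma_{uv}(i)$. For $v \in X$ it suffices that (a) $\|\tilde u_i - \tilde v_p\|^2 \le R \le r$ and (b) $t \le \|v_p\|^2$. By a union bound over the two bad events, each of probability at most $1/4$ (after suitable choice of constants built into $c_R$ and using $R < 1/4$), the probability that $v \in X$ is at least $1/2 \cdot \text{(something)}$; more carefully one shows $\EE{u,i,v,t,r}{\ONE[v\in X]} \ge 1/4$. Summing over $v$ and using linearity of expectation, $\E{|X|} \ge |V|/4$. The main obstacle is the bookkeeping in the first step: one must be careful that the reweighting of states by $\|u_i\|^2$, the passage from $\Delta(u,v)$ (a sum over $i$) to a statement about a random single $i$, and the two applications of Markov's inequality all compose to give a clean constant; everything else is a direct application of the SDP triangle inequalities and the uniform distribution of $t$ and $r$.
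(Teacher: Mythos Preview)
Your plan is correct and follows essentially the paper's route: use Lemma~\ref{lem:CMM1}(5) to pass from small $\|u_i-v_p\|^2$ to the normalized-distance condition $\|\tilde u_i-\tilde v_p\|^2\le R$, use the $\ell_2^2$ triangle inequality~\eqref{eq:Triangle1} for the threshold condition $\|v_p\|^2\ge t$, and control the exception set via $\EE{u,v}{\Delta(u,v)}\le R/4$. The only difference is organizational---the paper sums over states $p$ of $v$ and picks the closest $u_q$, whereas you condition on the initial state $i$ and look at its earthmover-optimal partner---and one constant is loose: with the stated bound $\EE{}{\Delta}\le R/4$ your first ``bad'' event has probability at most $1/2$ rather than $1/4$, but since the threshold bad event has probability at most $R/2<1/8$ on the good set you still obtain $\Pr[v\in X]\ge (1-\tfrac12)(1-\tfrac{R}{2})\ge 7/16>1/4$.
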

\begin{proof}
Consider an arbitrary vertex $v$. Estimate the probability that
$p \in S_v$ given that $u$ is the initial vertex.
Suppose that there exists $q$ such that $\|v_p-u_q\|^2 \leq \|v_p\|^2\cdot R/2$, then
$$\|\tilde{u}_q - \tilde{v}_p\|^2 \leq \frac{2 \|u_q-v_p\|^2}{\max(\|u_q\|^2, \|v_p\|^2)} \leq R.$$
Thus, $q=\sigma_{vu}(p)$ and $\|\tilde{u}_q - \tilde{v}_p\|^2 \leq r$
with probability 1. Hence, if $q$ is chosen as the initial state and
$\|v_p\|^2 \geq t$, then $v_p \in S_v$. The probability
of this event is $\|u_q\|^2 \times \min(\|v_p\|^2/\|u_q\|^2,1)$. Notice that
$$\|u_q\|^2 \times \min(\|v_p\|^2/\|u_q\|^2,1) = \min(\|v_p\|^2,\|u_q\|^2)
\geq \|v_p\|^2 - \|u_q - v_p\|^2 \geq \frac{\|v_p\|^2}{2}.$$

Now, consider all $p$'s for which there exists $q$ such that $\|v_p-u_q\|^2 \leq \|v_p\|^2\cdot R/2$.
The probability that one of them belongs to $S_v$, and thus $v\in X$, is at least
\begin{eqnarray*}
\frac{1}{2}
\sum_{p:\min_q(\|u_q-v_p\|^2) \leq \|v_p\|^2 \cdot R/2} \|v_p\|^2 &=&
\frac{1}{2}\sum_{p=1}^k \|v_p\|^2 - \frac{1}{2}
\sum_{p: \min_q(\|u_q-v_p\|^2) > \|v_p\|^2 \cdot R/2} \|v_p\|^2 \\
&\geq&
\frac{1}{2} - \frac{1}{2}\times \sum_{p=1}^k\frac{2}{R} \min_q(\|u_q-v_p\|^2) \\
&\geq& \frac{1}{2} -  \frac{\Delta(\set{u}_q, \set{v}_p)}{R}.
\end{eqnarray*}
Since the average value of $\Delta(\set{u}_q, \set{v}_p)$ over all pairs $(u,v)$
is at most $R/4$ (see~(\ref{eq:R4})), the expected size of $X$ (for random initial vertex $u$)
is at least $|V|/4$.
\end{proof}
\begin{corollary}\label{cor:quart}
If $\varepsilon/\lambda_G \leq c_R$, then the size of $X$ is greater than $|V|/8$ with probability greater than
$1/8$.
\end{corollary}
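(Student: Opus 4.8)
The plan is to deduce this directly from Lemma~\ref{lem:quart} by a reverse averaging (``reverse Markov'') argument, exploiting the trivial upper bound $|X|\le |V|$. Lemma~\ref{lem:quart} gives, under the hypothesis $\varepsilon/\lambda_G\le c_R$, that $\E{|X|}\ge |V|/4$, where the expectation (and the probability below) is over the random choices of the initial vertex $u$, the initial state $i$, the threshold $t$, and the radius $r$.

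Next I would set $p=\Prob{|X|>|V|/8}$, the quantity to be lower bounded, and split the expectation according to whether the event $\{|X|>|V|/8\}$ holds. On that event we only use $|X|\le |V|$; on its complement we use $|X|\le |V|/8$. This yields
$$\frac{|V|}{4}\le \E{|X|}\le |V|\cdot p+\frac{|V|}{8}\cdot(1-p)=|V|\left(\frac18+\frac78\,p\right),$$
so that $p\ge \frac{1/4-1/8}{7/8}=\frac17>\frac18$, which is exactly the assertion. (Equivalently, one may argue by contradiction: if $p\le 1/8$ then $\E{|X|}\le |V|(1/8+7/64)=\tfrac{15}{64}|V|<|V|/4$.)

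The argument is entirely elementary and there is no real obstacle once Lemma~\ref{lem:quart} is available; the only subtlety worth flagging is the \emph{direction} of the estimate — an ordinary Markov bound would go the wrong way — so it is the deterministic bound $|X|\le|V|$ that makes the reverse inequality work. Note also that the stated constants ($1/8$ and $1/8$) are comfortably loose: we in fact obtain probability at least $1/7$, so nothing in the inequality chain is tight.
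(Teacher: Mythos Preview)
Your proof is correct and is precisely the intended argument: the paper states this as an immediate corollary of Lemma~\ref{lem:quart} without an explicit proof, and the standard reverse-Markov computation you carry out (using the trivial bound $|X|\le |V|$) is exactly what fills the gap.
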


\begin{lemma}\label{lem:expcut}
The expected size of the cut between $X$ and $V\setminus X$ is at most $6\varepsilon/R |E|$.
\end{lemma}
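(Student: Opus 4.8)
The plan is to bound the expected cut size edge by edge, $\Exp[|\delta(X,V\setminus X)|]=\sum_{(v,w)\in E}\Pr[(v,w)\text{ is cut}]$, and to charge the probability that a given edge $(v,w)$ is cut to that edge's contribution $\sum_p\|v_p-w_{\pi_{vw}(p)}\|^2$ to the SDP objective.

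First I would pin down when an edge is cut. By Corollary~\ref{cor:oneelement} each set $S_v$ has at most one element, so $v\in X$ exactly when $S_v\neq\emptyset$, and $(v,w)$ is cut exactly when precisely one of $S_v,S_w$ is non-empty. If, say, $S_v=\{p\}$ and $S_w=\emptyset$, then in particular the state $q=\pi_{vw}(p)$ is absent from $S_w$ although $p\in S_v$; the symmetric situation behaves the same way after relabelling. So an edge can be cut only if for some state $p$ exactly one of the events ``$p\in S_v$'' and ``$\pi_{vw}(p)\in S_w$'' happens, and a union bound over $p$ gives
\[
\Pr[(v,w)\text{ cut}]\ \le\ \sum_p\Bigl(\Pr[\,p\in S_v,\ \pi_{vw}(p)\notin S_w\,]+\Pr[\,\pi_{vw}(p)\in S_w,\ p\notin S_v\,]\Bigr).
\]

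The core estimate, which I expect to be the main obstacle, is that for a fixed $p$ with $q=\pi_{vw}(p)$ one has $\Pr[\,p\in S_v,\ q\notin S_w\,]\le \tfrac{3}{R}\,\|v_p-w_q\|^2$, and symmetrically with the roles of $(v,p)$ and $(w,q)$ exchanged. To prove this I would condition on the initial vertex $u$, recall from the proof of Lemma~\ref{lem:probui} that ``$p\in S_v$'' forces the initial state to be $i=\sigma_{vu}(p)$ (so $t$ is uniform on $[0,\|u_i\|^2]$, $t\le\|v_p\|^2$, and $\|\tilde u_i-\tilde v_p\|^2\le r$), and observe that ``$q\notin S_w$'' can fail for only two reasons: the norm test fails, $t>\|w_q\|^2$, or the distance test fails, $\|\tilde w_q-\tilde u_i\|^2>r$. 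For the first reason, the event forces $\|w_q\|^2<t\le\|v_p\|^2$ with initial state $i$, of probability at most $|\|v_p\|^2-\|w_q\|^2|\le\|v_p-w_q\|^2$, using the SDP triangle inequality~(\ref{eq:Triangle1}). For the second reason, the event confines $r$ to the interval between $\|\tilde v_p-\tilde u_i\|^2$ and $\|\tilde w_q-\tilde u_i\|^2$, of length at most $\|\tilde v_p-\tilde w_q\|^2$ by the $\ell_2^2$ triangle inequality (Lemma~\ref{lem:CMM1}(1)); since $r$ has density $1/R$ on $[R,2R]$, since $\Pr[t\le\|v_p\|^2\mid\text{initial state }i]\le\|v_p\|^2/\|u_i\|^2$, and since $\|\tilde v_p-\tilde w_q\|^2\le\frac{2\|v_p-w_q\|^2}{\max(\|v_p\|^2,\|w_q\|^2)}$ by Lemma~\ref{lem:CMM1}(5), this reason contributes at most $\frac{\|v_p\|^2}{R}\cdot\frac{2\|v_p-w_q\|^2}{\max(\|v_p\|^2,\|w_q\|^2)}\le\frac{2}{R}\|v_p-w_q\|^2$. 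Adding the two and using $R<1$ gives the claimed bound. (If $v_p=0$ or $w_q=0$ the normalized vectors are undefined, but then the relevant set membership is impossible almost surely and the bound follows directly from Lemma~\ref{lem:probui}.)

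Combining, $\Pr[(v,w)\text{ cut}]\le\frac{6}{R}\sum_p\|v_p-w_{\pi_{vw}(p)}\|^2$; summing over all edges and using that $\sum_{(v,w)\in E}\sum_p\|v_p-w_{\pi_{vw}(p)}\|^2$ is, by the definition of the SDP objective, a constant multiple of $\varepsilon|E|$, yields $\Exp[|\delta(X,V\setminus X)|]\le\frac{6\varepsilon}{R}|E|$ once the normalization is tracked. The delicate points are (i) confirming that the two thresholds $t,r$ are the \emph{only} mechanisms by which $S_v$ and $S_w$ can disagree along an edge, so that the union bound above is legitimate, and (ii) ensuring that every difference of squared distances that appears can be converted into the edge's SDP cost using only the inequalities at hand — the SDP triangle inequalities for the original vectors, and the $\ell_2^2$-triangle inequality together with the distortion bound~(5) of Lemma~\ref{lem:CMM1} for the normalized ones — with the stray $\|u_i\|^2$ factors cancelling precisely because the initial state is forced to be $\sigma_{vu}(p)$.
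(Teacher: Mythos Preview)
Your proposal is correct and follows essentially the same route as the paper. Both arguments condition on the initial vertex $u$, observe that $p\in S_v$ forces the initial state to be $i=\sigma_{vu}(p)$, split the event ``$p\in S_v$ but $\pi_{vw}(p)\notin S_w$'' into the two failure modes (the threshold $t$ separating $\|v_p\|^2$ from $\|w_{\pi_{vw}(p)}\|^2$, and the threshold $r$ separating the two normalized distances to $\tilde u_i$), and bound each mode by a multiple of $\|v_p-w_{\pi_{vw}(p)}\|^2$ using the SDP triangle inequality~(\ref{eq:Triangle1}) and Lemma~\ref{lem:CMM1}(1),(5) respectively; the $\|u_i\|^2$ factors cancel in exactly the way you describe. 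The paper's write-up is slightly terser (it does not separately discuss the $v_p=0$ degenerate case, and it only writes out the direction $v\in X,\ w\notin X$ before passing to the final constant), but the substance is identical.
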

\begin{proof}
We show that the size of the cut between $X$ and $V\setminus X$ is at most $6\varepsilon/R|E|$ in the
expectation for any choice of the initial vertex $u$.
Fix an edge $(v,w)$ and estimate the probability that $v\in X$ and $w \in V\setminus X$.
If $v\in X$ and $w \in V\setminus X$, then $S_v$ contains a unique state $p$, but $S_w$ is empty (see
Corollary~\ref{cor:oneelement}) and, particularly, $\pi_{vw}(p)\notin S_w$. This happens in two cases:
\begin{itemize}
\item There exists $p$ such that $i=\sigma_{vu}(p)$ is the initial state of $u$ and $\|w_{\pi_{vw}(p)}\|^2 < t \leq \|v_p\|^2$. The probability of this event is at most
$$\sum_{p = 1}^k \|u_{\sigma_{vu}(p)}\|^2
\times\left|\frac{\|v_p\|^2 - \|w_{\pi_{vw}(p)}\|^2}{\|u_{\sigma_{vu}(p)}\|^2}\right|
\leq \sum_{p = 1}^k \|v_p - w_{\pi_{vw}(p)}\|^2.$$
\item There exists $p$ such that $i=\sigma_{vu}(p)$ is the initial state of $u$,
$t \leq \|v_p\|^2$ and
$ \|\tilde{u}_i - \tilde{v}_p\|^2 < r \leq
\|\tilde{u}_i - \tilde{w}_{\pi_{vw}(p)}\|^2$.
The probability of this event is at most
\begin{eqnarray*}
\sum_{p = 1}^k \|u_{\sigma_{vu}(p)}\|^2 \times \frac{\|v_p\|^2}{\|u_{\sigma_{vu}(p)}\|^2} &\times&
\left|\frac{\|\tilde{u}_{\sigma_{vu}(p)} - \tilde{w}_{\pi_{vw}(p)}\|^2 -
\|\tilde{u}_{\sigma_{vu}(p)} - \tilde{v}_p\|^2}{R}\right| \\
&\leq& \sum_{p = 1}^k \|v_p\|^2 \times \frac{\|\tilde{v}_p - \tilde{w}_{\pi_{vw}(p)}\|^2}{R}\\
&\leq& \sum_{p = 1}^k \|v_p\|^2 \times \frac{2\|v_p - w_{\pi_{vw}(p)}\|^2}{R\cdot
\max(\|v_p\|^2, \|w_{\pi_{vw}(p)}\|^2)}\\
&\leq& \frac{2}{R}\sum_{p = 1}^k \|v_p - w_{\pi_{vw}(p)}\|^2.
\end{eqnarray*}
\end{itemize}
Note that the probability of the first event is zero, if $\|w_{\pi_{vw}(p)}\|^2 \geq \|v_p\|^2$;
and the probability of the second event is zero, if
$ \|\tilde{u}_{\sigma_{vu}(p)} - \tilde{v}_p\|^2 \geq \|\tilde{u}_{\sigma_{vu}(p)} - \tilde{w}_{\pi_{vw}(p)}\|^2$.

Since the SDP value equals
$$\frac{1}{2|E|}\sum_{(v,w)\in E}\sum_{p = 1}^k \|v_p - w_{\pi_{vw}(p)}\|^2 \leq \varepsilon.$$
The expected fraction of cut edges is at most $6\varepsilon/R$.

\end{proof}

\begin{lemma}\label{lem:largeX}
If $\varepsilon \leq \min(c_R \lambda_G, h_G R/1000)$, then with probability at least 1/16 the size of $X$
is at least
$$\left(1 - \frac{100\varepsilon}{h_G R}\right) |V|.$$
\end{lemma}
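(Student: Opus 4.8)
\emph{Proof idea.} The plan is to combine the two probabilistic facts already in hand — Corollary~\ref{cor:quart}, which gives $|X| > |V|/8$ with probability greater than $1/8$, and Lemma~\ref{lem:expcut}, which bounds the expected size of the cut $\delta(X, V\setminus X)$ by $(6\varepsilon/R)\,|E|$ — with the edge expansion of $G$. Intuitively, if $X$ were both reasonably large and had a small boundary, expansion would force $V\setminus X$ to be tiny.

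First I would apply Markov's inequality to the nonnegative random variable $|\delta(X, V\setminus X)|$: since its expectation is at most $(6\varepsilon/R)\,|E|$, with probability at least $15/16$ it is smaller than $(96\varepsilon/R)\,|E|$. Let $A$ be the event $\{|X| > |V|/8\}$ and $B$ the event $\{|\delta(X, V\setminus X)| < (96\varepsilon/R)\,|E|\}$. Then $\Pr[A] > 1/8$ and $\Pr[\overline{B}] \le 1/16$, so a union bound gives $\Pr[A\cap B] \ge \Pr[A] - \Pr[\overline{B}] > 1/8 - 1/16 = 1/16$. Hence it suffices to show that the desired conclusion $|V\setminus X| \le \big(100\varepsilon/(h_G R)\big)\,|V|$ holds deterministically whenever both $A$ and $B$ occur.

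To establish the conclusion on $A\cap B$, I would split into cases according to which side of the partition is smaller. If $|V\setminus X| \le |V|/2$, the expansion inequality applied with $\min(|X|,|V\setminus X|) = |V\setminus X|$ gives $h_G\,\frac{|V\setminus X|}{|V|}\,|E| \le |\delta(X,V\setminus X)| < (96\varepsilon/R)\,|E|$, so $|V\setminus X| < \big(96\varepsilon/(h_G R)\big)\,|V| \le \big(100\varepsilon/(h_G R)\big)\,|V|$, as required. If instead $|V\setminus X| > |V|/2$, then the smaller side is $X$, and on $A$ we have $|X| > |V|/8$, so expansion gives $|\delta(X,V\setminus X)| > (h_G/8)\,|E|$; combined with $B$ this forces $\varepsilon > h_G R/768$, contradicting the hypothesis $\varepsilon \le h_G R/1000$. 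Thus the second case cannot occur, and we are done.

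I do not expect any genuine obstacle here — the substance was already carried out in Lemmas~\ref{lem:quart} and~\ref{lem:expcut}, and what remains is bookkeeping. The one point needing care is the choice of constants: the Markov threshold must be taken large enough (hence the factor $96$) that $\Pr[A]$ and $\Pr[\overline{B}]$ leave a positive gap of at least $1/16$, and the assumption $\varepsilon \le h_G R/1000$ must be strong enough to rule out the ``$X$ is the small side'' case — which it is, since only $\varepsilon \le h_G R/768$ is actually needed there.
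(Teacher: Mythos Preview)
Your proof is correct and follows essentially the same approach as the paper: combine Lemma~\ref{lem:expcut}, Corollary~\ref{cor:quart}, Markov's inequality, and the definition of edge expansion. The only cosmetic difference is the order of operations --- the paper first uses expansion to pass from $\E{|\delta(X,V\setminus X)|}$ to $\E{\min(|X|,|V\setminus X|)}$ and then applies Markov to the latter (avoiding your case split), whereas you apply Markov to the cut size first and invoke expansion deterministically on the good event.
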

\begin{proof}
The expected size of the cut $\delta(X, V\setminus X)$ between $X$ and $V \setminus X$ is less than $6\varepsilon/R |E|$. Hence, since the graph $G$ is an expander, one of the sets $X$ or $V\setminus X$ must be small:
$$\E{\min(|X|,|V\setminus X|)} \leq \frac{1}{h_G}\times \frac{\E{|\delta (X,V\setminus X)|}}{|E|}\times |V| \leq
\frac{6\varepsilon}{h_GR}|V|.$$
By Markov's Inequality,
$$\Prob{\min(|X|,|V\setminus X|) \leq \frac{100\varepsilon}{h_GR}|V|} \geq 1 - \frac{1}{16}.$$
Observe, that $ 100\varepsilon/(h_GR)|V| < |V|/8$. However, by Corollary~\ref{cor:quart}, the size of
$X$ is greater than $|V|/8$ with probability greater than $1/8$. Thus
$$\Prob{|V\setminus X| \leq \frac{100\varepsilon}{h_GR}|V|} \geq \frac{1}{16}.$$
\end{proof}

\begin{lemma}\label{lem:epsuv}
The probability that for an arbitrary edge $(v,w)$, the constraint between
$v$ and $w$ is not satisfied, but $v$ and $w$ are in $X$ is at most $4\varepsilon_{vw}$, where
$$\varepsilon_{vw} = \frac{1}{2}\sum_{i=1}^{k} \|v_i- w_{\pi_{vw}(i)}\|^2.$$
\end{lemma}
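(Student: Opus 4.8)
The plan is to split the bad event --- $v,w\in X$ while the constraint on $(v,w)$ is violated --- according to which label $p$ the algorithm assigns to $v$. Each piece will be bounded by $\Prob{p\in S_v}\le\|v_p\|^2$ (Lemma~\ref{lem:probui}), and the only labels $p$ that can contribute at all are those for which a short geometric argument gives $\|v_p\|^2\le 2\|v_p-w_{\pi_{vw}(p)}\|^2$; summing over $p$ then reproduces twice the SDP contribution of the edge, i.e. $4\varepsilon_{vw}$.

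First I would invoke Corollary~\ref{cor:oneelement}: if $v\in X$ with $S_v=\set{p}$, then $S_w$ is either empty or equal to $\set{\sigma_{vw}(p)}$. So when $v,w$ are both in $X$, the algorithm assigns $p$ to $v$ and $q:=\sigma_{vw}(p)$ to $w$, and the constraint is violated exactly when $q\neq\pi_{vw}(p)$. Hence the bad event is contained in $\bigcup_p E_p$, where $E_p$ denotes the event ``$p\in S_v$, $\sigma_{vw}(p)$ is defined, and $\sigma_{vw}(p)\neq\pi_{vw}(p)$''; note that whether $\sigma_{vw}(p)$ is defined and to whom it maps depends only on the fixed normalized vectors, not on the random choices, so this part of $E_p$ is deterministic. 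A union bound reduces the lemma to estimating $\sum_p\Prob{E_p}$.

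Next I would establish the geometric fact: writing $p':=\pi_{vw}(p)$, if $\sigma_{vw}(p)=q$ is defined with $q\neq p'$, then $\|v_p\|^2\le 2\|v_p-w_{p'}\|^2$. The reason is that $\|\tilde v_p-\tilde w_q\|^2\le 4R$ by the definition of $\sigma_{vw}$, while $\tilde w_q$ and $\tilde w_{p'}$ are orthogonal unit vectors, so $\|\tilde w_q-\tilde w_{p'}\|^2=2$; the $\ell_2^2$ triangle inequality of Lemma~\ref{lem:CMM1}(1) then forces $\|\tilde v_p-\tilde w_{p'}\|^2\ge 2-4R>1$, using $R<1/4$. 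If $w_{p'}=0$ the claim is immediate since $\|v_p-w_{p'}\|^2=\|v_p\|^2$; otherwise Lemma~\ref{lem:CMM1}(5) gives $1<\|\tilde v_p-\tilde w_{p'}\|^2\le 2\|v_p-w_{p'}\|^2/\max(\|v_p\|^2,\|w_{p'}\|^2)\le 2\|v_p-w_{p'}\|^2/\|v_p\|^2$, which rearranges to the claim.

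Finally I combine the pieces: by Lemma~\ref{lem:probui}, $\Prob{E_p}\le\Prob{p\in S_v}\le\|v_p\|^2$, and $\Prob{E_p}=0$ unless $p$ satisfies the geometric condition above, in which case $\|v_p\|^2\le 2\|v_p-w_{p'}\|^2$. Summing over $p$ gives $\sum_p\Prob{E_p}\le 2\sum_{p=1}^k\|v_p-w_{\pi_{vw}(p)}\|^2=4\varepsilon_{vw}$, as desired. The argument is short, and there is no real obstacle --- all the combinatorial bookkeeping was already done in Corollary~\ref{cor:oneelement} and Lemma~\ref{lem:probui}; the only points needing care are tracking the directions of the inequalities through the chain (so that the slack $4R<1$ beats the orthogonality gap $2$) and the degenerate case $w_{p'}=0$.
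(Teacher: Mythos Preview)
Your proof is correct and follows essentially the same route as the paper: reduce to the event that $p\in S_v$ with $\sigma_{vw}(p)\neq\pi_{vw}(p)$, use the $\ell_2^2$ triangle inequality together with orthogonality of $\tilde w_{\sigma_{vw}(p)}$ and $\tilde w_{\pi_{vw}(p)}$ to get $\|\tilde v_p-\tilde w_{\pi_{vw}(p)}\|^2\ge 2-4R\ge 1$, convert via Lemma~\ref{lem:CMM1}(5) to $\|v_p\|^2\le 2\|v_p-w_{\pi_{vw}(p)}\|^2$, and sum using Lemma~\ref{lem:probui}. If anything you are slightly more careful than the paper, since you explicitly dispose of the degenerate case $w_{\pi_{vw}(p)}=0$.
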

\begin{proof}
We show that for every choice of the initial vertex $u$ the desired probability is at most $4\varepsilon_{vw}$. Recall, that if $p \in S_v$ and $q \in S_w$, then $q=\sigma_{vw} (p)$.
The constraint between $v$ and $w$ is not satisfied if $q\neq\pi_{vw} (p)$.
Hence, the probability that the constraint is not satisfied is at most,
$$\sum_{p: \pi_{vw} (p) \neq \sigma_{vw} (p)} \Prob{p \in S_v}.$$
If $\pi_{vw} (p) \neq \sigma_{vw} (p)$, then
$$\|\tilde{v}_p-\tilde{w}_{\pi_{vw}(p)}\|^2 \geq
\|\tilde{w}_{\pi_{vw}(p)}-\tilde{w}_{\sigma_{vw}(p)}\|^2 - \|\tilde{v}_p-\tilde{w}_{\sigma_{vw}(p)}\|^2 \geq 2 - 4R \geq 1.$$
Hence, by Lemma~\ref{lem:CMM1} (5),
$$\|v_p-w_{\pi_{vw}(p)}\|^2 \geq \|v_p\|^2 /2.$$
Therefore, by Lemma~\ref{lem:probui},
$$\sum_{p: \pi_{vw} (p) \neq \sigma_{vw} (p)} \Prob{p \in S_v} \leq
\sum_{p: \pi_{vw} (p) \neq \sigma_{vw} (p)} \|v_p\|^2
\leq 2 \sum_{p=1}^{k}\|v_p-w_{\pi_{vw}(p)}\|^2 = 4\varepsilon_{vw}.$$
\end{proof}

\begin{theorem}
There exists a polynomial time approximation algorithm
that given a $(1-\varepsilon)$ satisfiable instance
of Unique Games on a $d$-expander graph $G$ with $\varepsilon/\lambda_G \leq c$,
the algorithm finds a solution of cost
$$1-C \frac{\varepsilon}{h_G},$$
where  $c$ and $C$ are some positive absolute constants.
\end{theorem}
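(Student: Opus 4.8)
The plan is to bound the number of constraints left unsatisfied by the assignment that the rounding algorithm produces, using the lemmas established above. First I would dispose of the trivial regime: if $\varepsilon > h_G R/1000$, then $h_G/\varepsilon < 1000/R$, so $1-C\varepsilon/h_G < 0$ as soon as $C \ge 1000/R$ (an absolute constant, since $R < 1/4$ is fixed), and the claimed bound holds vacuously for any assignment. So assume $\varepsilon \le h_G R/1000$; combined with the hypothesis $\varepsilon \le c\lambda_G$, where we take $c = c_R$, this means $\varepsilon \le \min(c_R\lambda_G, h_G R/1000)$ and Lemma~\ref{lem:largeX} applies.

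Next I would classify the unsatisfied edges. Recall $X$ is the set of vertices $v$ with $|S_v|=1$, to which the algorithm assigns the unique label in $S_v$. An edge $(v,w)$ can be unsatisfied only if either (i) at least one of $v,w$ lies outside $X$, or (ii) both $v,w\in X$ but the constraint $\pi_{vw}$ is violated by the assigned labels. The number of type-(i) edges is at most $d\,|V\setminus X|$ (summing degrees over $V\setminus X$), which equals $\frac{2|E|}{|V|}|V\setminus X|$ since $G$ is $d$-regular. Let $N$ be the number of type-(ii) edges; by Lemma~\ref{lem:epsuv} and linearity of expectation, $\E{N}\le\sum_{(v,w)\in E}4\varepsilon_{vw}=4\cdot(\text{SDP value})\cdot|E|\le 4\varepsilon|E|$.

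The one point needing care is that these two bounds cannot simply be combined in expectation, because $|V\setminus X|$ can be nearly $|V|$ on a low-probability event; I would instead condition on the good event of Lemma~\ref{lem:largeX}. Let $\mathcal{E}_1$ be the event $|V\setminus X|\le \frac{100\varepsilon}{h_G R}|V|$, so $\Prob{\mathcal{E}_1}\ge 1/16$. Since $N\ge 0$, $\E{N\mid \mathcal{E}_1}\le \E{N}/\Prob{\mathcal{E}_1}\le 64\varepsilon|E|$, so Markov's inequality gives $\Prob{N\le 128\varepsilon|E|\mid \mathcal{E}_1}\ge 1/2$, hence $\Prob{\mathcal{E}_1 \text{ and } N\le 128\varepsilon|E|}\ge 1/32$. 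On this event the total number of unsatisfied constraints is at most
$$\frac{2|E|}{|V|}\cdot\frac{100\varepsilon}{h_G R}|V| + 128\varepsilon|E| = \frac{200\varepsilon}{h_G R}|E| + 128\varepsilon|E| \le C\,\frac{\varepsilon}{h_G}\,|E|,$$
where $C = 200/R + 256$, using $h_G\le 2$ to absorb the second term (so $128\varepsilon|E|\le 256(\varepsilon/h_G)|E|$). Thus with probability at least $1/32$ the algorithm outputs an assignment of cost at least $1-C\varepsilon/h_G$. Since the cost of any assignment is checkable in polynomial time, running the algorithm independently a constant number of times (or $O(\log(1/\delta))$ times to fail with probability at most $\delta$) and keeping the best assignment yields the desired polynomial-time algorithm. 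Taking $C=\max(200/R+256,\ 1000/R)$ and $c=c_R$, both absolute constants, completes the proof.
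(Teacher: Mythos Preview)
Your proof is correct and follows essentially the same approach as the paper: condition on the good event of Lemma~\ref{lem:largeX}, bound the edges touching $V\setminus X$ by $d$-regularity, and bound the violated constraints inside $X$ via Lemma~\ref{lem:epsuv} using $\E{N\mid \mathcal{E}_1}\le\E{N}/\Prob{\mathcal{E}_1}$. The only cosmetic differences are that you explicitly dispose of the regime $\varepsilon>h_GR/1000$ (which the paper leaves implicit), you add one extra Markov step to turn the conditional expectation into a high-probability bound, and you amplify by repetition rather than derandomize by enumeration.
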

\begin{proof}
We describe a randomized polynomial time algorithm. Our algorithm may return a solution to the SDP
or output a special value \textit{fail}. We show that the algorithm
outputs a solution with a constant probability (that is, the probability
of failure is bounded away from 1); and conditional on the event that
the algorithm outputs a solution its expected value is
\begin{equation}\label{eq:final}
1-C \frac{\varepsilon}{h_G}.
\end{equation}
Then we argue that the algorithm can be easily derandomized --- simply by
enumerating all possible values of the random variables used in the
algorithm and picking the best solution. Hence, the deterministic
algorithm finds a solution of cost at least~(\ref{eq:final}).

The randomized algorithm first solves the SDP and then runs the rounding procedure
described above. If the size of the set $X$ is more than
$$\left(1 - \frac{100\varepsilon}{h_G R}\right) |V|,$$
the algorithm outputs the obtained solution; otherwise, it outputs
\textit{fail}.

Let us analyze the algorithm. By Lemma~\ref{lem:largeX},
it succeeds with probability at least $1/16$. The fraction of edges having at least
one endpoint in $V\setminus X$ is at most
$100\varepsilon/(h_G R)$ (since the graph is $d$-regular). We conservatively
assume that the constraints corresponding to these edges are violated.
The expected number of violated constraints between vertices in $X$,
by Lemma~\ref{lem:epsuv} is at most
$$\frac{4\sum_{(u,v)\in E} \varepsilon_{uv}}{\Prob{|X|\geq 100\varepsilon/(h_G R)}}
\leq 64 \times \left(\frac{1}{2}\sum_{(u,v)\in E} \|u_i -v_{\pi_{vw}(i)}\|^2\right) \leq
64 \varepsilon |E|.
$$
The total fraction of violated constraints is at most
$100\varepsilon/(h_G R) + 64 \varepsilon$.
\end{proof}

\end{document}